\newcommand{\rv}[1]{\ensuremath{{\boldsymbol{#1}}}}
\newcommand{\nvec}[1]{\ensuremath{{{\underline{#1}}}}}
\newcommand{\rvec}[1]{\ensuremath{{\boldsymbol{\underline{#1}}}}}
\newcommand{\mat}[1]{{\ensuremath{{\mathbf{#1}}}}}
\newcommand{\pinv}{^\dagger}
\newcommand{\tr}{^{\top}}
\DeclareMathOperator{\ttrace}{trace}
\newcommand{\trace}[1]{\ttrace\bklammer{#1}}
\newcommand{\expect}[1]{\mathrm{E}\left\lbrace #1 \right\rbrace}
\newcommand{\IN}{\mathbb{N}}  
\newcommand{\IR}{\mathbb{R}}  
\DeclareMathOperator{\vvec}{vec}
\renewcommand{\vec}[1]{\vvec\klammer{#1}}
\DeclareMathOperator{\ddevec}{devec}
\newcommand{\devec}[1]{\ddevec\klammer{#1}}
\DeclareMathOperator{\st}{subject\ to}
\newcommand{\zeromatrix}{\mat{0}}
\newcommand{\identitymatrix}{\mat{I}}
\newcommand{\klammer}[1]{\left( #1 \right)}
\newcommand{\bklammer}[1]{\left[ #1 \right]}
\newcommand{\indicator}[1]{\mathbbm{1}_{#1}}
\newcommand{\kron}[2]{{#1 \otimes #2}}
\DeclareMathOperator{\diag}{diag}
	\newcommand{\Asys}[1]{\mat{A}_{#1}}
	\newcommand{\Bsys}[1]{\mat{B}_{#1}}
	\newcommand{\Qsys}[1]{\mat{Q}_{#1}}
	\newcommand{\Rsys}[1]{\mat{R}_{#1}}
	\newcommand{\Hsys}[1]{\mat{H}_{#1}}
	\newcommand{\xsys}[1]{\rvec{x}_{#1}}
	\newcommand{\usys}[1]{\nvec{u}_{#1}}
	\newcommand{\wsys}[1]{\rvec{w}_{#1}}
	\newcommand{\wCov}{\mat{W}}
	\newcommand{\Aaug}[1]{\widetilde{\mat{A}}_{#1}}
	\newcommand{\bigN}[1]{\nvec{\rho}_{#1}}
	\newcommand{\mode}[1]{\rv{\theta}_{#1}}
	\newcommand{\modeis}[1]{\theta_{#1}}
	\newcommand{\modeest}[2]{\widehat{\theta}^{(#1)}_{#2}}
	\newcommand{\modeinf}{\widehat{\theta}_{\infty}}
	\newcommand{\spectralradius}[1]{\rho\klammer{#1}}
	\newcommand{\spectralradiusopt}[1]{\widetilde{\rho}\klammer{#1}}
	\newcommand{\bigM}{\mat{M}}
	\newcommand{\TransitionMatrix}[1]{\mat{T}_{#1}}
	\newcommand{\transitionprob}[1]{p_{#1}}
	\newcommand{\Xsysest}[2]{\mat{X}^{(#1)}_{#2}}	
	\newcommand{\ControlLaw}{\mat{L}}
	\newcommand{\NumModes}{M}		
	\newcommand{\terT}{K}
	\newcommand{\CostFunc}{\mathcal{J}}
	\renewcommand{\P}[1]{\mat{P}_{#1}}
	\newcommand{\Pest}[2]{\mat{P}^{(#1)}_{#2}}
	\newcommand{\Lagrange}[2]{\mat{\Lambda}^{(#1)}_{#2}}
	\newcommand{\Hamiltonian}{\mathcal{H}}
	\newcommand{\vectorizedState}[1]{\nvec{\psi}_{#1}}
\newtheorem{theorem}{Theorem}
\newtheorem{remark}{Remark}
\newtheorem{definition}{Definition}
\newtheorem{proof}{Proof}
\begin{document}
    \begin{frontmatter}
        \title{Infinite-horizon Linear Optimal Control of\\Markov Jump Systems without Mode Observation\\via State Feedback}
        
        \author{Maxim~Dolgov}
        \ead{maxim.dolgov@kit.edu}
        
        \author{Uwe~D.~Hanebeck}
        \ead{uwe.hanebeck@ieee.org}
        
        \address{Intelligent Sensor-Actuator-Systems Laboratory (ISAS)\\
                 Institute for Anthropomatics and Robotics\\
                 Karlsruhe Institute of Technology (KIT), Germany\vspace{3mm}}
        
        \begin{abstract}
        In this paper, we consider stochastic optimal control of Markov Jump Linear Systems with state feedback but without observation of the jumping parameter. The proposed control law is assumed to be linear with constant gains that can be obtained from the necessary optimality conditions using an iterative algorithm. The proposed approach is demonstrated in a numerical example.
        \end{abstract}
    \end{frontmatter}
    
   \section{Introduction}
   	\label{sec:Introduction}
   	Since their introduction by Krasovskii and Lidskii in 1969 \cite{Krasovskii_1961_1,Krasovskii_1961_2,Krasovskii_1961_3}, Markov Jump Linear Systems (MJLS) have received a considerable amount of interest. This is due to their ability to capture systems whose dynamics are subject to abrupt changes that are not independently distributed. MJLS modeling approach is used, e.g., in networked control~\cite{Hespanha_2007,ACC13_Fischer}, economics~\cite{doVal_1999,Elliott_2007}, or control of systems with component failures~\cite{Vargas_2013}.\\

Most works that consider control of MJLS assume availability of the jumping parameter or \emph{mode} that models the abrupt model switching. This assumption allows to derive optimal control laws in continuous~\cite{Sworder_1969} and discrete time~\cite{Chizeck_1986,Fragoso_1989} for systems with state feedback. For measurement-feedback case, mode availability guarantees that the separation between control and estimation holds. Thus, the optimal control law consists of an optimal linear regulator and an optimal Kalman filter~\cite{Chizeck_1988}.\\

However, if the mode is not available, the control law becomes nonlinear because of the dual effect~\cite{Griffiths_1985,Casiello_1989}. In this case, the optimal solution is computationally intractable due to the curse of dimensionality. Thus, research concentrates on approximate control laws. We distinguish between two classes of approaches: (i) approaches based on assumed separation and (ii) approaches based on structural assumptions. Approaches that belong to the first class approximate the involved conditional densities. By doing so, it is possible to establish separation. Then, the optimal control law consists of an estimator and a regulator whose gains are linear. The estimator is either based on an Interacting Multiple Model (IMM) algorithm~\cite{Campo_1992} or on a Viterbi-like algorithm~\cite{Gupta_2003}. Approaches that belong to the second class make an assumption considering the control law, usually that the control law is linear such as in~\cite{doVal_1999,Costa_2000,Vargas_2013,Vargas_2014}.\\

Between full mode observation and no mode observation is the clustered mode observation. The term clustered can refer to (1) temporal interchange between full mode observation and no mode observation, and (2) observation of subsets of modes, i.e., observation whether one of the modes in a subset is active or not. We will not review this field in our paper. We refer an interested reader to, e.g.,~\cite{doVal_1999,doVal_2002} and the references therein.\\

In this paper, we take the approach (ii) and assume the controller to be linear and to possess constant regulator gain. Our approach differs from the works \cite{doVal_1999,Costa_2000,Vargas_2013,Vargas_2014} in the following way: \cite{doVal_1999,Costa_2000,Vargas_2013} assume time-variant controller gains and \cite{Vargas_2014} considers finite-horizon control with constant gains. And the works \cite{doVal_1999,Costa_2000,Vargas_2013,Vargas_2014} have to be implemented in a receding-horizon framework to be applicable for long operation times. The approach presented in \cite{Vargas_2014} can be used to compute constant gains. However, in this case, the optimization horizon becomes a parameter that must be chosen sufficiently large in order to obtain an infinite-horizon control law. To obtain the controller gain for the approach presented in this paper, we minimize an infinite-horizon cost function. By doing so, there is neither a need for choosing an optimization horizon, nor for implementing the control law in a receding-horizon framework. However, the latter can be done in order to, e.g., adapt the control law to changes in the system dynamics (both continuous- and discrete-valued), if desired. As we will see in the numerical example, the performance of the proposed controller, although it is time-invariant, can be almost identical to the performance of the receding-horizon time-variant controller from~\cite{Vargas_2013}.\\

The dynamics of the discrete-time MJLS considered in this paper are given by
\begin{align}
\begin{aligned}
\xsys{k+1}
	&=
	\Asys{\mode{k}}\xsys{k} + \Bsys{\mode{k}}\usys{k} + \Hsys{\mode{k}}\wsys{k}\ ,
\end{aligned}
\label{eq:SystemDynamics}
\end{align}
where $\xsys{k}\in\IR^n$ denotes the system state, $\usys{k}\in\IR^m$ the control input, and $\wsys{k}\in\IR^w$ the independent and identically distributed (i.i.d.) zero-mean second-order noise with covariance $\wCov = \expect{\wsys{k}\wsys{k}\tr}$. Here $\expect{\cdot}$ is the expectation operator and $\mat{A}\tr$ denotes the transpose of $\mat{A}$. The matrices $\Asys{\mode{k}}$, $\Bsys{\mode{k}}$, and $\Hsys{\mode{k}}$ are selected from time-invariant sets of matrices $\{\Asys{1},\cdots,\Asys{\NumModes}\}$, $\NumModes\in\IN$, etc. according to the jumping parameter $\mode{k}\in\{1,2,\cdots,\NumModes\}$ which is the state of a regular homogeneous Markov chain. We will refer to $\mode{k}$ as the \emph{mode}. The regularity assumption guarantees that the limit distribution $\modeinf$ of $\mode{k}$ exists~\cite{Grinstead_2003}.\\

The performance of the controlled system is measured by an infinite-horizon cost function
\begin{align}
\CostFunc
	&=
	\lim\limits_{\terT\rightarrow\infty} \frac{1}{\terT} \expect{\sum\limits_{k=0}^{\terT} \bklammer{\xsys{k}\tr\Qsys{\mode{k}}\xsys{k} + \usys{k}\tr\Rsys{\mode{k}}\usys{k}}}\ ,
\label{eq:CostFunc}
\end{align}
where for $i\in\{1,\dots,\NumModes\}$ the mode-dependent cost matrices $\Qsys{i}$ are positive semidefinite and $\Rsys{i}$ are positive definite, respectively.\\

The task is to find a control law that minimizes (\ref{eq:CostFunc}). As mentioned above, the optimal nonlinear control law that solves this task is computationally intractable. Thus, we make a structural assumption and choose the control law to be linear, mode-independent, and constant, i.e.,
\begin{align*}
\usys{k}
	&=
	\ControlLaw\xsys{k}\ .
\end{align*}
With this control law assumption, the considered problem can be formulated as
\begin{align}
\begin{aligned}
\min\limits_\ControlLaw&\hspace{2mm} \lim\limits_{\terT\rightarrow\infty} \frac{1}{\terT} \expect{\sum\limits_{k=0}^{\terT} \bklammer{\xsys{k}\tr(\Qsys{\mode{k}}+\ControlLaw\tr\Rsys{\mode{k}}\ControlLaw)\xsys{k}}} \\
\st&\hspace{2mm}
	\xsys{k+1} = \Asys{\mode{k}}\xsys{k} + \Bsys{\mode{k}}\usys{k} + \Hsys{\mode{k}}\wsys{k}\ ,\\
	&\hspace{2mm}\xsys{0}\ ,\ \modeis{0}\ ,\ \TransitionMatrix{}\ .
\end{aligned}
\label{eq:OptimizationProblem}
\end{align}

\emph{Outline.} The remainder of the paper is organized as follows. Before we present the main result in Sec.~\ref{sec:MainResult}, we introduce necessary definitions in the next section. A numerical example is given in Sec.~\ref{sec:NumericalExample} and Sec.~\ref{sec:Conclusion} concludes the paper.
%
%
%

   \section{Prerequisites}
   	\label{sec:Definitions}
Consider the MJLS 
\begin{align}
\xsys{k+1}
	&=
	\Asys{\mode{k}}\xsys{k}
\label{eq:StabSysDynamics}
\end{align}
with $\xsys{k}\in\IR^n$ being the system state, $\mode{k}\in\{1,\dots,\NumModes\}$ being the state of a regular, homogeneous Markov chain with transition matrix $\TransitionMatrix{}$, and $\Asys{\mode{k}}\in\{\Asys{1},\dots,\Asys{\NumModes}\}$.
\begin{definition}[Mean Square Stability]\hfill\\
System (\ref{eq:StabSysDynamics}) is mean square (MS) stable for any initial $\xsys{0}$ and $\mode{0}$, if it holds
\begin{align*}
\lim\limits_{k\rightarrow\infty} \left\|\expect{\xsys{k}\xsys{k}\tr}\right\| = 0\ .
\end{align*}
\label{def:MSS}
\end{definition}

\begin{remark}
If system~(\ref{eq:StabSysDynamics}) is affected by zero-mean second-order noise $\wsys{k}$ such that
\begin{align*}
\xsys{k+1}
	&=
	\Asys{\mode{k}}\xsys{k} + \Hsys{\mode{k}}\wsys{k}
\end{align*}
then the second moment $\expect{\xsys{k}\xsys{k}\tr}$ converges to a fixed point that is not $\zeromatrix$, i.e.,
\begin{align*}
\lim\limits_{k\rightarrow\infty} \expect{\xsys{k}\xsys{k}\tr} = \{\Qsys{1},\dots,\Qsys{\NumModes}\}\ ,
\end{align*}
where $\Qsys{i}$ are positive semidefinite. This claim can be shown using Banach's fixed point theorem (see~\ref{app:ProofConvergence}).
\end{remark}

The following theorem provides necessary and sufficient conditions for MS stability.
\begin{theorem}
For system~(\ref{eq:StabSysDynamics}), the two following conditions for mean square stability exist.
\begin{enumerate}[label = \alph*)]
\item 
	System~(\ref{eq:StabSysDynamics}) is MS stable, if for any positive matrices $\Qsys{1},\dots,\Qsys{\NumModes}$ there exist positive definite matrices $\P{1},\dots,\P{\NumModes}$ such that
	\begin{align*}
	\sum\limits_{i=1}^{\NumModes} \transitionprob{ij} \Asys{i}\tr\P{j}\Asys{i} - \P{i} = -\Qsys{i}\ ,
	\end{align*}
	where $\transitionprob{ij}$ denotes the transition probabilities from mode $i$ to mode $j$.
\item
	System~(\ref{eq:StabSysDynamics}) is MS stable, if for the spectral radius $\spectralradius{\cdot}$ of the matrix
	\begin{align}
		\bigM
			&=
			\klammer{\kron{\TransitionMatrix{}\tr}{\identitymatrix}} \diag\begin{bmatrix}(\kron{\Asys{1}}{\Asys{1}}) & \dots & (\kron{\Asys{\NumModes}}{\Asys{\NumModes}})\end{bmatrix}\ ,
	\label{eq:AugmentedMatrix}
	\end{align}
	where $\kron{}{}$ is the Kronecker product and $\diag$ the block diagonalization operator, it holds
	\begin{align*}
		\spectralradius{\bigM} < 1\ .
	\end{align*}
\end{enumerate}
\label{theor:MSS}
\end{theorem}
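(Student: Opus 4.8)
The plan is to prove both sufficient conditions by reducing mean square stability to a deterministic linear recursion on the stacked second moments. First I would introduce the mode-indexed second-moment matrices $\Xpart{i}{k} = \expect{\xsys{k}\xsys{k}\tr \indicator{\modeis{k}=i}}$ for $i \in \{1,\dots,\NumModes\}$, so that $\expect{\xsys{k}\xsys{k}\tr} = \sum_{i=1}^{\NumModes} \Xpart{i}{k}$. Using the dynamics~(\ref{eq:StabSysDynamics}) together with the Markov property of $\mode{k}$, I would derive the coupled recursion
\begin{align*}
\Xpart{j}{k+1} = \sum_{i=1}^{\NumModes} \transitionprob{ij}\, \Asys{i} \Xpart{i}{k} \Asys{i}\tr\ .
\end{align*}
Vectorizing each $\Xpart{i}{k}$ and stacking them into $\vectorizedState{k} = \begin{bmatrix}\vec{\Xpart{1}{k}}\tr & \cdots & \vec{\Xpart{\NumModes}{k}}\tr\end{bmatrix}\tr$, and using the identity $\vec{\Asys{i} \Xpart{i}{k} \Asys{i}\tr} = (\kron{\Asys{i}}{\Asys{i}})\vec{\Xpart{i}{k}}$, this recursion becomes $\vectorizedState{k+1} = \bigM \vectorizedState{k}$ with $\bigM$ exactly the matrix in~(\ref{eq:AugmentedMatrix}). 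Part b) then follows immediately: $\left\|\expect{\xsys{k}\xsys{k}\tr}\right\| \to 0$ for every $\xsys{0},\mode{0}$ if and only if $\vectorizedState{k} \to \zeromatrix$ for every nonnegative starting vector, which holds precisely when $\spectralradius{\bigM} < 1$.

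For part a), I would take the discrete-time coupled Lyapunov equation $\sum_{i} \transitionprob{ij}\Asys{i}\tr\P{j}\Asys{i} - \P{i} = -\Qsys{i}$ and show it certifies $\spectralradius{\bigM} < 1$. The natural route is to define the candidate Lyapunov function $v_k = \sum_{i=1}^{\NumModes} \trace{\P{i}\Xpart{i}{k}} = \sum_i \expect{\xsys{k}\tr\P{i}\xsys{k}\indicator{\modeis{k}=i}}$. Substituting the second-moment recursion, swapping the order of summation, and applying the coupled Lyapunov equation gives $v_{k+1} - v_k = -\sum_i \trace{\Qsys{i}\Xpart{i}{k}} \le 0$, with strict decrease whenever $\vectorizedState{k}\neq\zeromatrix$ (since the $\Qsys{i}$ are positive and the $\Xpart{i}{k}$ are positive semidefinite and not all zero). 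Because $v_k$ is a positive definite quadratic form in $\vectorizedState{k}$ that is strictly decreasing along the nonzero trajectories of $\vectorizedState{k+1} = \bigM \vectorizedState{k}$, standard Lyapunov theory for linear systems yields $\spectralradius{\bigM} < 1$, and part b) then delivers MS stability. Equivalently, one can phrase this as: the coupled Lyapunov equation is the componentwise form of $\bigM\tr$ acting on the stacked $\vec{\P{i}}$, so solvability with positive definite $\P{i}$ for every positive $\Qsys{i}$ is the familiar discrete Lyapunov characterization of $\spectralradius{\bigM}<1$.

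The main obstacle is bookkeeping rather than conceptual: carefully establishing the second-moment recursion for $\Xpart{i}{k}$ from the joint dynamics of $(\xsys{k},\mode{k})$ — in particular handling the conditioning on $\mode{k}$ and the independence of $\wsys{k}$ (which is absent here but must be accounted for in the noiseless reduction) — and then matching the Kronecker-product bookkeeping so that the stacked operator is literally $\bigM$ as written, including the transpose and the block-diagonal placement. Once the recursion $\vectorizedState{k+1} = \bigM\vectorizedState{k}$ and the Lyapunov identity $v_{k+1}-v_k = -\sum_i\trace{\Qsys{i}\Xpart{i}{k}}$ are in hand, both implications are immediate consequences of classical linear-systems Lyapunov theory.
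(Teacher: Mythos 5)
The first thing to note is that the paper does not actually prove Theorem~\ref{theor:MSS}: its ``proof'' is a pointer to the literature (Morozan, Fang/Loparo, Ling, and Costa/Fragoso), so your proposal is not competing with an in-paper argument but supplying the standard one that those references contain. Your route is the classical one and is essentially correct: the recursion $\Xpart{j}{k+1}=\sum_{i=1}^{\NumModes}\transitionprob{ij}\Asys{i}\Xpart{i}{k}\Asys{i}\tr$ is right (using that $\mode{k+1}$ is conditionally independent of $\xsys{k}$ given $\mode{k}$), the identity $\vec{\Asys{i}\Xpart{i}{k}\Asys{i}\tr}=(\kron{\Asys{i}}{\Asys{i}})\vec{\Xpart{i}{k}}$ stacks to exactly the $\bigM$ of~(\ref{eq:AugmentedMatrix}), and $\spectralradius{\bigM}<1$ then gives part b) immediately (only sufficiency is claimed in the theorem, so you do not even need the converse, which would require checking that the admissible initial vectors span each block). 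Two small wrinkles in part a): the function $v_k=\sum_i\trace{\P{i}\Xpart{i}{k}}$ is a positive \emph{linear} functional of $\vectorizedState{k}$ (it is quadratic in $\xsys{k}$, not in $\vectorizedState{k}$), and deducing $\spectralradius{\bigM}<1$ from its decrease is delicate because the trajectories live only in the cone of stacked positive semidefinite matrices, not in the whole space. That detour is unnecessary: from $v_{k+1}-v_k=-\sum_i\trace{\Qsys{i}\Xpart{i}{k}}$ you get directly that $v_k$ is nonincreasing and bounded below, hence the decrements tend to zero, and positive definiteness of the $\Qsys{i}$ (together with $\Xpart{i}{k}\succeq\zeromatrix$) forces $\Xpart{i}{k}\rightarrow\zeromatrix$, i.e., MS stability, without ever mentioning $\bigM$. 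With that one step rerouted, your argument is a complete and correct proof of both sufficient conditions, and it has the added benefit of making explicit the second-moment/vectorization machinery that the paper reuses later in~\ref{app:ProofConvergence}.
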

\begin{proof}
The proofs are given in~\cite{Morozan_1983,Fang_2002,Ling_2012} for systems with real-valued state and in~\cite{Costa_1993} for systems with complex-valued state.
\end{proof}

Next, we define MS stabilizability.
\begin{definition}[Mean Square Stabilizability]\hfill\\
System
\begin{align*}
\xsys{k+1}
	&=
	\Asys{\mode{k}}\xsys{k} + \Bsys{\mode{k}}\usys{k}\ ,
\end{align*}
with $\xsys{k}\in\IR^n$ being the system state, $\mode{k}\in\{1,\dots,\NumModes\}$ being the state of a regular, homogeneous Markov chain, and $\Asys{\mode{k}}\in\{\Asys{1},\dots,\Asys{\NumModes}\}$, $\Bsys{\mode{k}}\in\{\Bsys{1},\dots,\Bsys{\NumModes}\}$, is linearly mean square (MS) stabilizable without mode observation, if there exists a matrix $\ControlLaw$ such that
\begin{align*}
\xsys{k+1}
	&=
	\klammer{\Asys{\mode{k}}+\Bsys{\mode{k}}\ControlLaw}\xsys{k}
\end{align*}
is mean square stable.
\label{def:MSStab}
\end{definition}

   \section{Main Result}
   	\label{sec:MainResult}
Before we present the necessary optimality conditions for~(\ref{eq:OptimizationProblem}), we define the second-moment system state
\begin{align*}
\Xsysest{i}{k}
	&=
	\expect{\xsys{k}\xsys{k}\tr\indicator{\mode{k}=i}}\ ,
\end{align*}
where $\indicator{\mode{k}=i}=1$ if $\mode{k}=i$ and $0$ otherwise. The dynamics of the second-moment system state are
\begin{align}
\Xsysest{j}{k+1}
	&=
	\sum\limits_{i=1}^{\NumModes}\transitionprob{ij}\bklammer{(\Asys{i}+\Bsys{i}\ControlLaw)\Xsysest{i}{k}(\Asys{i}+\Bsys{i}\ControlLaw)\tr + \modeest{i}{k}\Hsys{i}\wCov\Hsys{i}\tr}\ ,
\label{eq:SecondMomentDynamics}
\end{align}
where $\modeest{i}{k}$ is the probability of being in mode $i$ at time step $k$.
\begin{theorem}[Necessary Optimality Conditions]\hfill\\
The necessary optimality conditions for the optimization problem~(\ref{eq:OptimizationProblem}) are given by
\begin{align}
	(\Asys{i}+\Bsys{i}\ControlLaw)\tr \Pest{i}{\infty} (\Asys{i}+\Bsys{i}\ControlLaw) + (\Qsys{i}+\ControlLaw\tr\Rsys{i}\ControlLaw) - \Lagrange{i}{\infty} &= \zeromatrix\ ,
\label{eq:CoupledEquationsP}\\
	\sum\limits_{j=1}^{\NumModes}\transitionprob{ji}\bklammer{(\Asys{j}+\Bsys{j}\ControlLaw)\Xsysest{j}{\infty}(\Asys{j}+\Bsys{j}\ControlLaw)\tr + \modeest{j}{k}\Hsys{j}\wCov\Hsys{j}\tr} - \Xsysest{i}{\infty} &= \zeromatrix\ ,
\label{eq:CoupledEquationsExx}\\
	\sum\limits_{i=1}^{\NumModes} \bklammer{(\Rsys{i}+\Bsys{i}\tr\Pest{i}{\infty}\Bsys{i})\ControlLaw\Xsysest{i}{\infty} + \Bsys{i}\tr\Pest{i}{\infty}\Asys{i}\Xsysest{i}{\infty}} &= \zeromatrix\ ,
\label{eq:CoupledEquationsL}
\end{align}
where $\Xsysest{i}{\infty} = \lim_{k\rightarrow\infty}\Xsysest{i}{k}$, $\Lagrange{\mode{k}}{\infty}\in\{\Lagrange{1}{\infty},\dots,\Lagrange{\NumModes}{\infty}\}$ are positive definite, and $\Pest{i}{\infty} = \sum_{j=1}^{\NumModes}\transitionprob{ij}\Lagrange{j}{\infty}$.
\label{theor:Main}
\end{theorem}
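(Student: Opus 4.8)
The plan is to treat \eqref{eq:OptimizationProblem} as a constrained optimization problem over the pair $(\ControlLaw, \{\Xsysest{i}{k}\})$ and derive the stationarity conditions via the method of Lagrange multipliers. First I would rewrite the cost \eqref{eq:CostFunc} in terms of the second-moment state: using $\expect{\xsys{k}\tr(\Qsys{\modeis{k}}+\ControlLaw\tr\Rsys{\modeis{k}}\ControlLaw)\xsys{k}} = \sum_{i=1}^{\NumModes}\trace{(\Qsys{i}+\ControlLaw\tr\Rsys{i}\ControlLaw)\Xsysest{i}{k}}$, the objective becomes a linear functional of the $\Xsysest{i}{k}$. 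Assuming MS stability of the closed loop (so that $\Xsysest{i}{k}\to\Xsysest{i}{\infty}$, the fixed point of \eqref{eq:SecondMomentDynamics} with $\modeest{i}{k}\to\modeinf$), the Cesàro limit in \eqref{eq:CostFunc} equals $\sum_{i=1}^{\NumModes}\trace{(\Qsys{i}+\ControlLaw\tr\Rsys{i}\ControlLaw)\Xsysest{i}{\infty}}$. Thus the problem reduces to minimizing this steady-state cost over $\ControlLaw$ subject to the algebraic fixed-point constraint \eqref{eq:CoupledEquationsExx}.

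Next I would form the Lagrangian by adjoining the constraint \eqref{eq:CoupledEquationsExx} with matrix multipliers $\Lagrange{i}{\infty}$ (one per mode), i.e. add $\sum_{i}\trace{\Lagrange{i}{\infty}\tr(\sum_{j}\transitionprob{ji}[(\Asys{j}+\Bsys{j}\ControlLaw)\Xsysest{j}{\infty}(\Asys{j}+\Bsys{j}\ControlLaw)\tr + \modeest{j}{k}\Hsys{j}\wCov\Hsys{j}\tr] - \Xsysest{i}{\infty})}$ to the objective. Setting the derivative of the Lagrangian with respect to each $\Xsysest{i}{\infty}$ to zero yields, after collecting the terms in which $\Xsysest{i}{\infty}$ appears (both the explicit $-\Xsysest{i}{\infty}$ in constraint $i$ and the occurrences inside constraint $j$ via $\transitionprob{ij}$), exactly \eqref{eq:CoupledEquationsP} with the identification $\Pest{i}{\infty} = \sum_{j}\transitionprob{ij}\Lagrange{j}{\infty}$; here the trace identity $\partial\,\trace{\mat{B}\ControlLaw\mat{X}\ControlLaw\tr}/\partial\mat{X} = \ControlLaw\tr\mat{B}\ControlLaw$ and symmetry of $\Xsysest{i}{\infty}$ are used. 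The derivative with respect to the multipliers $\Lagrange{i}{\infty}$ simply returns primal feasibility \eqref{eq:CoupledEquationsExx}. Finally, differentiating the Lagrangian with respect to $\ControlLaw$ — using $\partial\,\trace{\ControlLaw\tr\Rsys{i}\ControlLaw\Xsysest{i}{\infty}}/\partial\ControlLaw = 2\Rsys{i}\ControlLaw\Xsysest{i}{\infty}$ and $\partial\,\trace{\Lagrange{i}{\infty}(\Asys{j}+\Bsys{j}\ControlLaw)\Xsysest{j}{\infty}(\Asys{j}+\Bsys{j}\ControlLaw)\tr}/\partial\ControlLaw = 2\Bsys{j}\tr\Lagrange{i}{\infty}(\Asys{j}+\Bsys{j}\ControlLaw)\Xsysest{j}{\infty}$, then re-summing over the transition probabilities to replace $\sum_{i}\transitionprob{ji}\Lagrange{i}{\infty}$ by $\Pest{j}{\infty}$ — gives $\sum_{i}[(\Rsys{i}+\Bsys{i}\tr\Pest{i}{\infty}\Bsys{i})\ControlLaw\Xsysest{i}{\infty} + \Bsys{i}\tr\Pest{i}{\infty}\Asys{i}\Xsysest{i}{\infty}]=\zeromatrix$, which is \eqref{eq:CoupledEquationsL}.

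The step I expect to be the main obstacle is the rigorous justification of the reduction of the infinite-horizon averaged cost to the steady-state trace expression, i.e. exchanging the limit $\terT\to\infty$ with the expectation and showing the Cesàro average of $\sum_{i}\trace{(\Qsys{i}+\ControlLaw\tr\Rsys{i}\ControlLaw)\Xsysest{i}{k}}$ converges to its value at the fixed point. This requires that the optimization is implicitly restricted to the set of MS-stabilizing gains $\ControlLaw$ (Definition~\ref{def:MSStab}), so that by Theorem~\ref{theor:MSS} the closed-loop second moments converge geometrically and the fixed point $\Xsysest{i}{\infty}$ of \eqref{eq:SecondMomentDynamics} is well-defined and unique; I would state this as a standing assumption. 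A secondary technical point is the careful bookkeeping when differentiating with respect to $\ControlLaw$, since $\ControlLaw$ enters both through the cost and through every term of the constraint, and the re-indexing of the Markov sums ($\sum_j\transitionprob{ji}$ versus $\sum_j\transitionprob{ij}$) must be handled consistently to land on the stated form with $\Pest{i}{\infty}$. The positive definiteness of the $\Lagrange{i}{\infty}$ follows from that of the cost matrices via the structure of \eqref{eq:CoupledEquationsP} read as a coupled Lyapunov-type equation, paralleling Theorem~\ref{theor:MSS}a).
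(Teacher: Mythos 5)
Your proposal is correct and follows essentially the same route as the paper's own proof: reduce the Cesàro-averaged cost to the steady-state trace $\sum_i \trace{(\Qsys{i}+\ControlLaw\tr\Rsys{i}\ControlLaw)\Xsysest{i}{\infty}}$ using MS stabilizability and the Banach fixed-point convergence of the second moments, adjoin the fixed-point constraint with matrix multipliers $\Lagrange{i}{\infty}$ to form a Lagrangian (the paper calls it the Hamiltonian), and differentiate with respect to $\Xsysest{i}{\infty}$, $\Lagrange{i}{\infty}$, and $\ControlLaw$, with the identification $\Pest{i}{\infty}=\sum_j\transitionprob{ij}\Lagrange{j}{\infty}$ arising exactly as you describe from the re-indexed Markov sums. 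Your version is in fact more explicit than the paper's, which states the differentiation step without carrying out the matrix calculus.
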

\begin{proof}
The proof is given in~\ref{app:ProofMainTheorem}.
\end{proof}

Please observe that equations~(\ref{eq:CoupledEquationsP}) constitute a set of coupled Riccati-like equations that reduce to the uncoupled Riccati equations if system~(\ref{eq:SystemDynamics}) has only one mode.\\

Finding a solution of~(\ref{eq:CoupledEquationsP})-(\ref{eq:CoupledEquationsL}) is not trivial. We propose to use a scheme similar to that presented in~\cite{DeKoning_1992} or~\cite{Bernstein_1987}. To this end, we first rewrite~(\ref{eq:CoupledEquationsL}) using the vectorization operator as
\begin{align*}
\klammer{\sum\limits_{i=1}^{\NumModes}\bklammer{\kron{\Xsysest{i}{\infty}}{(\Rsys{i}+\Bsys{i}\tr\Pest{i}{\infty}\Bsys{i}})}} \vec{\ControlLaw} + \vec{\sum\limits_{i=1}^{\NumModes} \Bsys{i}\tr\Pest{i}{\infty}\Asys{i}\Xsysest{i}{\infty}} = \nvec{0}\ .
\end{align*}
Solving for $\vec{\ControlLaw}$ yields
\begin{align*}
\vec{\ControlLaw}
	&=
	-\klammer{\sum\limits_{i=1}^{\NumModes}\bklammer{\kron{\Xsysest{i}{\infty}}{(\Rsys{i}+\Bsys{i}\tr\Pest{i}{\infty}\Bsys{i}})}}\pinv \vec{\sum\limits_{i=1}^{\NumModes} \Bsys{i}\tr\Pest{i}{\infty}\Asys{i}\Xsysest{i}{\infty}}\ ,
\end{align*}
where $\mat{A}\pinv$ denotes the Moore-Penrose pseudoinverse of $\mat{A}$.\\

The numerical algorithm is the following.\\

\begin{enumerate}[leftmargin = 1.4cm,rightmargin = 1cm,label= \hspace{1cm}\emph{Step \arabic*}:]
\item Initialize $\{\Xsysest{1}{[0]},\dots,\Xsysest{\NumModes}{[0]}\}$ and $\{\Lagrange{1}{[0]},\dots,\Lagrange{\NumModes}{[0]}\}$ with random values and compute $\allowbreak\{\Pest{1}{[0]}, \dots ,\Pest{\NumModes}{[0]}\}$.

\item Compute
\begin{align*}
\vec{\ControlLaw_{[k+1]}}
	&=
	-\klammer{\sum\limits_{i=1}^{\NumModes}\bklammer{\kron{\Xsysest{i}{[k]}}{(\Rsys{i}+\Bsys{i}\tr\Pest{i}{[k]}\Bsys{i}})}}\pinv \vec{\sum\limits_{i=1}^{\NumModes} \Bsys{i}\tr\Pest{i}{[k]}\Asys{i}\Xsysest{i}{[k]}}
\end{align*}
and reverse the vectorization operator in order to obtain $\ControlLaw_{[k+1]}$, i.e.,
\begin{align*}
\ControlLaw_{[k+1]}
	&=
	\devec{\vec{\ControlLaw_{[k+1]}}}
\end{align*}
with $\devec{\vec{\mat{A}}} = \mat{A}$.

\item Compute
\begin{align*}
\Xsysest{j}{[k+1]}
	&=
	\sum\limits_{i=1}^{\NumModes}\transitionprob{ij}\bklammer{(\Asys{i}+\Bsys{i}\ControlLaw_{[k+1]})\Xsysest{i}{[k]}(\Asys{i}+\Bsys{i}\ControlLaw_{[k+1]})\tr + \modeest{i}{\infty}\Hsys{i}\wCov\Hsys{i}\tr}\ ,\\
\Lagrange{i}{[k+1]}
	&=
	(\Asys{i}+\Bsys{i}\ControlLaw_{[k+1]})\tr \Pest{i}{[k]} (\Asys{i}+\Bsys{i}\ControlLaw_{[k+1]}) + (\Qsys{i}+\ControlLaw_{[k+1]}\tr\Rsys{i}\ControlLaw_{[k+1]})\ ,\\
\Pest{i}{[k+1]}
	&=
	\sum\limits_{i=1}^{\NumModes} \transitionprob{ij} \Lagrange{j}{[k+1]}\ .
\end{align*}
Stop if $\Xsysest{1}{[k]},\dots,\Xsysest{\NumModes}{[k]}$ converged. Otherwise, return to \emph{Step 2}.
\end{enumerate}

\begin{remark}
As in the case with i.i.d. system parameters considered in~\cite{DeKoning_1992}, convergence of the given algorithm does not always guarantee stability of the MJLS. Thus, it is always necessary to check if the computed control law stabilizes~(\ref{eq:SystemDynamics}) using Theorem~\ref{theor:MSS} with
\begin{align*}
\widetilde{\mat{A}}_{\mode{k}}
	&=
	\Asys{\mode{k}} + \Bsys{\mode{k}}\ControlLaw\ .
\end{align*}
\end{remark}

\begin{remark}
In order to check whether a MJLS is MS-stabilizable, it is possible to use the procedure described in Appendix~\ref{app:StabilizabilityTest}.
\end{remark}


   	
   \section{Numerical Example}
   	\label{sec:NumericalExample}
In order to demonstrate the performance of the proposed control approach, we performed Monte Carlo simulation runs with $100$ time steps each for different system and noise parameters. For each run, we computed the control law using different random initial guesses. The evolution of the mode and the noise were also randomly generated for each run. For comparison, we used the optimal controller published by Chizeck et al.~\cite{Chizeck_1986} that needs a mode feedback, and the finite-horizon controller without mode availability presented by Vargas et al.~\cite{Vargas_2013}.\\

The constant parameters of the simulated MJLS were chosen to
\begin{align*}
&\Asys{1} = \begin{bmatrix}1.2 & 1.2\\0& 1\end{bmatrix}\ ,\ \ 
\Asys{2} = \begin{bmatrix}1& 0.8\\0& 1\end{bmatrix}\ ,\ \ 
\Bsys{1} = \begin{bmatrix}0\\1\end{bmatrix}\ ,\ \ 
\Bsys{2} = \begin{bmatrix}0\\0.2\end{bmatrix}\ ,
\end{align*}
\begin{align*}
&\Hsys{1} = \identitymatrix\ ,\ \ 
\Hsys{2} = \identitymatrix\ ,\ \ 
\Qsys{1} = \identitymatrix\ ,\ \ 
\Qsys{2} = \identitymatrix\ ,\ \ 
\Rsys{1} = 1\ ,\ \ 
\Rsys{2} = 1\ .
\end{align*}
We considered two different noise scenarios with
\begin{align*}
\wCov_{1} = 0.01^2\ \ \text{and}\ \ \wCov_{2} = 0.5^2\ ,
\end{align*}
three different Markov chains with
\begin{align*}
\TransitionMatrix{1} = \begin{bmatrix}0.9 & 0.1\\ 0.1 & 0.9\end{bmatrix}\ ,\ \ 
\TransitionMatrix{2} = \begin{bmatrix}0.7 & 0.3\\ 0.6 & 0.4\end{bmatrix}\ ,\ \ 
\TransitionMatrix{3} = \begin{bmatrix}0.1 & 0.9\\ 0.3 & 0.7\end{bmatrix}\ ,
\end{align*}
and two different initial states
\begin{align*}
\xsys{0,1} = \begin{bmatrix}0 & 0\end{bmatrix}\tr\ \ \text{and}\ \ 
\xsys{0,2} = \begin{bmatrix}3 & 2\end{bmatrix}\tr\ .
\end{align*}
The spectral radii of the corresponding matrices constructed according to~(\ref{eq:AugmentedMatrix}) are 
\begin{align*}
\spectralradius{\mat{M}_1} = 1.3295\ ,\ \ 
\spectralradius{\mat{M}_2} = 1.2970\ ,\ \ \text{and}\ \ 
\spectralradius{\mat{M}_3} = 1.1047\ ,
\end{align*}
which shows that the MJLS is unstable for each of the transition matrices $\TransitionMatrix{1}$, $\TransitionMatrix{2}$, and $\TransitionMatrix{3}$.\\

Fig.~\ref{fig:ExampleRun} depicts the state trajectory, the applied control inputs, and the modes of the MJLS of an example run with $\TransitionMatrix{1}$, $\wCov_1$, and $\xsys{0} = \begin{bmatrix}3 & 2 \end{bmatrix}\tr$. Although the controller from~\cite{Vargas_2013} has time-variant gains while the proposed controller is time-invariant, the trajectories of both controllers are very similar.\\

The results of the Monte Carlo simulation with $\xsys{0} = \begin{bmatrix}0 & 0 \end{bmatrix}\tr$ are depicted in Fig.~\ref{fig:MC_x0_1} and the corresponding mean values of the costs are given in Table~\ref{tab:Costs_1}. In this scenario, the performance of the proposed controller and the controller from~\cite{Vargas_2013} is only slightly worse than the performance of the optimal controller with mode observation. And the performance of the proposed controller and the controller from~\cite{Vargas_2013} is almost equal. For $\xsys{0}=\begin{bmatrix}3&2\end{bmatrix}$, the simulation results are depicted in Fig.~\ref{fig:MC_x0_2} and the mean costs are given in Table~\ref{tab:Costs_2}. In this second scenario, the proposed control law performs well compared to the two other controllers if the noise covariance is large. However, the performance is worse if the noise covariance is low and the transition matrix is either $\TransitionMatrix{1}$ or $\TransitionMatrix{3}$. It is important to note that in contrast to the controller from~\cite{Vargas_2013}, the proposed controller is precomputed offline and does not depend on the initial state $\xsys{0}$ and the initial mode $\modeis{0}$. Thus, the computational footprint during operation is low.\\

\begin{figure}[h]
\centering
\includegraphics[width = .8\textwidth]{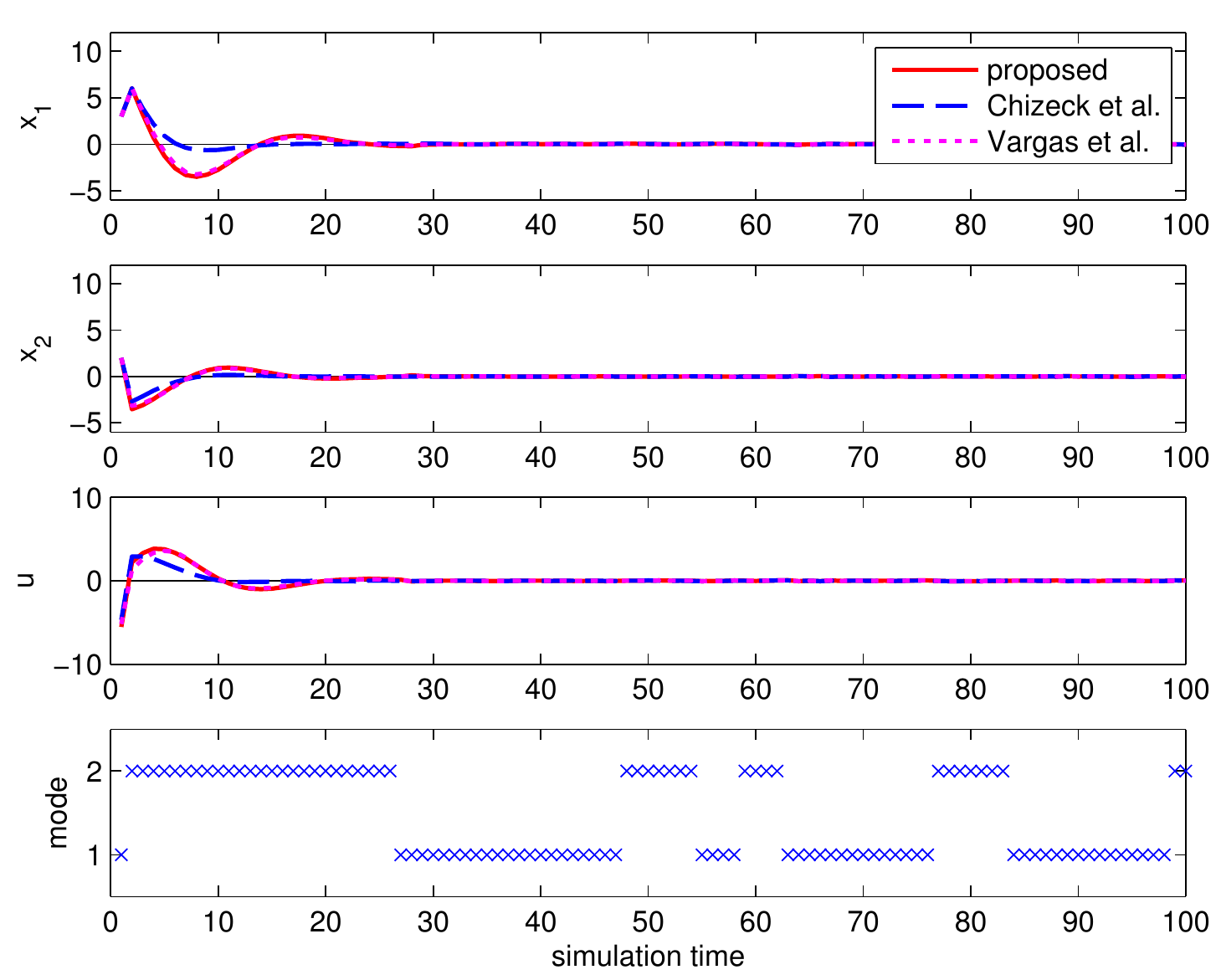}
\caption{Example run of the three compared controllers with $\TransitionMatrix{1}$, $\wCov_1$, and $\xsys{0} =\lbrack 3\hspace{2mm}  2\rbrack\tr$.}
\label{fig:ExampleRun}
\end{figure}

\begin{figure}[h]
\centering
\includegraphics[width = .9\textwidth]{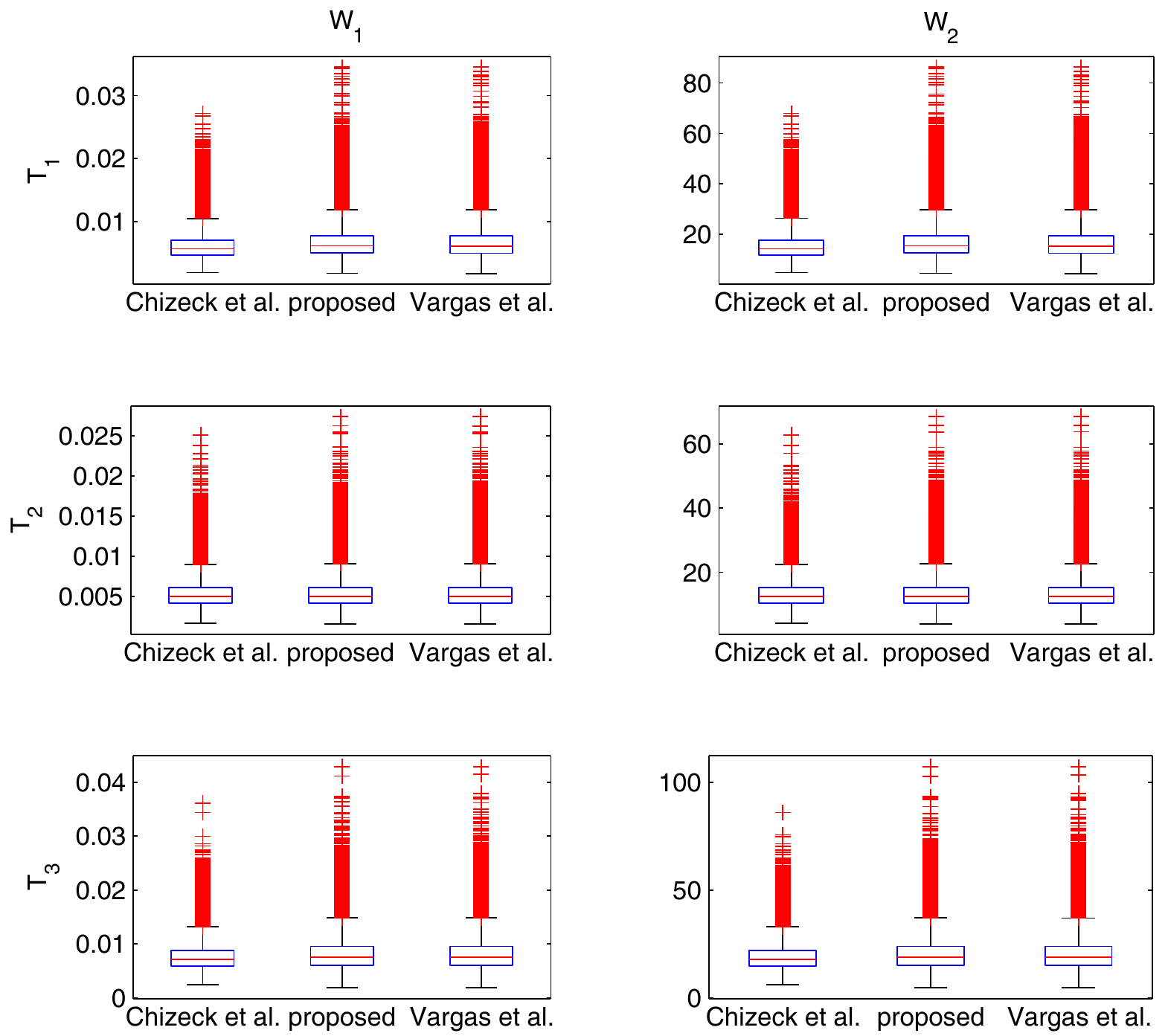}
\caption{Results of the Monte Carlo simulation. Depicted are the costs of the three compared controllers for different transition matrices and noise covariances, and $\xsys{0} =\lbrack 0\hspace{2mm}  0\rbrack\tr$.}
\label{fig:MC_x0_1}
\end{figure}

\begin{figure}[h]
\centering
\includegraphics[width = .9\textwidth]{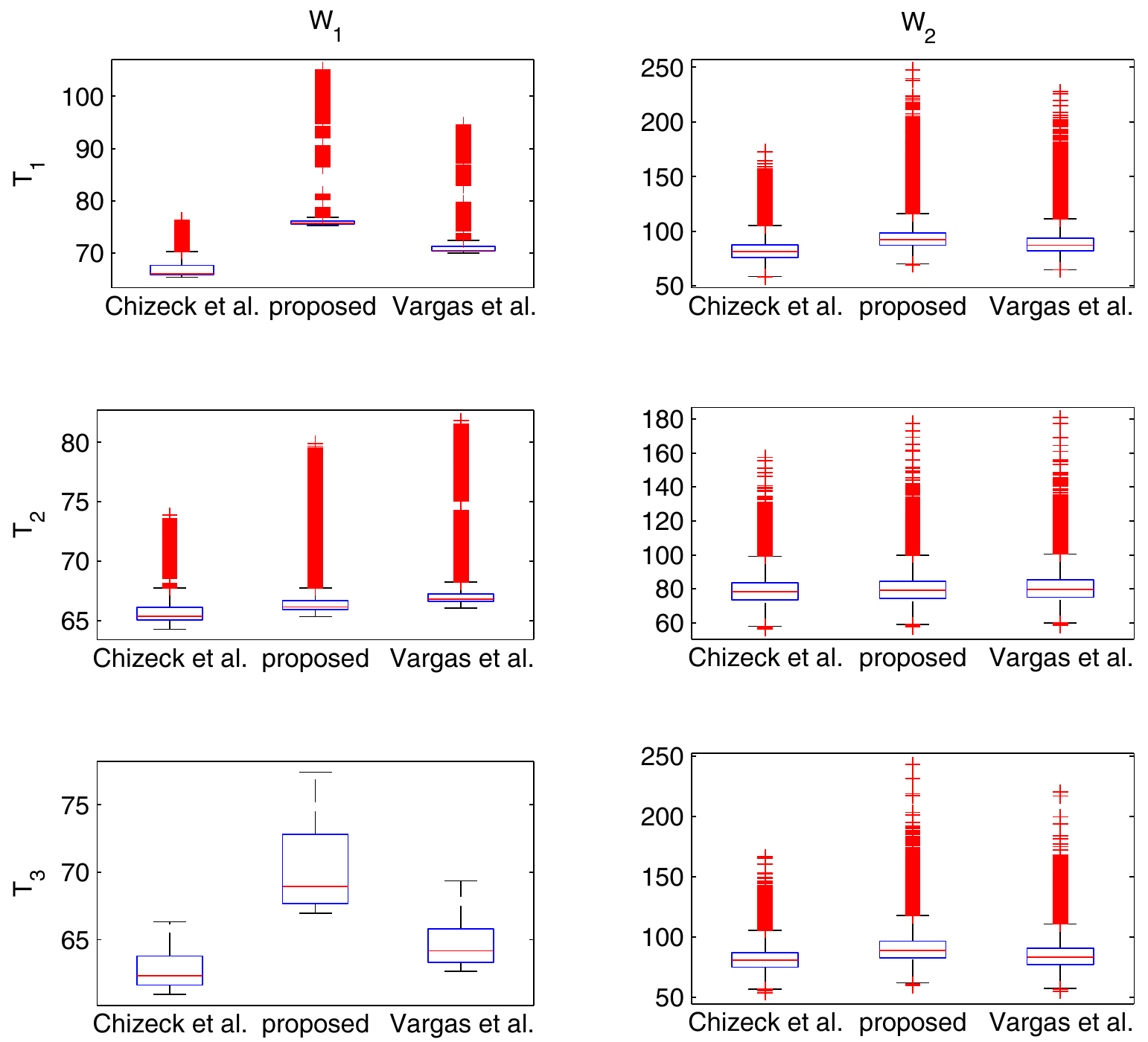}
\caption{Results of the Monte Carlo simulation. Depicted are the costs of the three compared controllers for different transition matrices and noise covariances, and $\xsys{0} =\lbrack 3\hspace{2mm}  2\rbrack\tr$.}
\label{fig:MC_x0_2}
\end{figure}

\begin{table}[h]
\centering
\begin{tabular}{ccccccc}
\toprule
& \multicolumn{3}{c}{$\wCov_1$} & \multicolumn{3}{c}{$\wCov_2$}\\ \cmidrule{2-7}
& Chizeck et al. & proposed & Vargas et al.& Chizeck et al. & proposed & Vargas et al.\\\cmidrule{1-7}
\multicolumn{1}{c}{$\TransitionMatrix{1}$}& $6.1005\text{e}^{-3}$ & $6.7175\text{e}^{-3}$ & $6.6756\text{e}^{-3}$ & $15.2508$ & $16.7925$ & $16.6886$ \\
\cmidrule{1-7}
\multicolumn{1}{c}{$\TransitionMatrix{2}$}& $5.2853\text{e}^{-3}$ & $5.3018\text{e}^{-3}$ & $5.2953\text{e}^{-3}$ & $13.2122$ & $13.2535$ & $13.2373$ \\
\cmidrule{1-7}
\multicolumn{1}{c}{$\TransitionMatrix{3}$}& $7.5863\text{e}^{-3}$ & $8.1038\text{e}^{-3}$ & $8.0839\text{e}^{-3}$ & $18.9620$ & $20.2582$ & $20.2065$ \\
\bottomrule
\end{tabular}
\caption{Mean costs of the three compared controllers for $\xsys{0} =\lbrack 0\hspace{2mm}  0\rbrack\tr$.}
\label{tab:Costs_1}
\end{table}

\begin{table}[h]
\centering
\begin{tabular}{ccccccc}
\toprule
& \multicolumn{3}{c}{$\wCov_1$} & \multicolumn{3}{c}{$\wCov_2$}\\ \cmidrule{2-7}
& Chizeck et al. & proposed & Vargas et al.& Chizeck et al. & proposed & Vargas et al.\\\cmidrule{1-7}
\multicolumn{1}{c}{$\TransitionMatrix{1}$}& $67.5311$ & $77.9765$ & $72.6012$ & $82.7204$ & $94.7306$ & $89.4974$ \\
\cmidrule{1-7}
\multicolumn{1}{c}{$\TransitionMatrix{2}$}& $65.9251$ & $66.8358$ & $67.5779$ & $79.1335$ & $80.0799$ & $80.8079$ \\
\cmidrule{1-7}
\multicolumn{1}{c}{$\TransitionMatrix{3}$}& $62.6784$ & $70.3547$ & $64.7367$ & $81.6350$ & $90.6844$ & $84.9624$ \\
\bottomrule
\end{tabular}
\caption{Mean costs of the three compared controllers for $\xsys{0} =\lbrack 3\hspace{2mm}  2\rbrack\tr$.}
\label{tab:Costs_2}
\end{table}

An implementation of the presented control law is available at the CloudRunner homepage~\cite{Cloudrunner}.
   	
   \section{Conclusion}
   	\label{sec:Conclusion}
In this paper, we presented a method to compute a constant linear policy for infinite-horizon optimal control of stochastic MJLS with state feedback but without mode observation. To this end, we have rewritten the MJLS dynamics in terms of the second moment, constructed the Hamiltonian, and proposed an iterative algorithm that minimizes the cost function.\\

In the provided numerical example, the proposed control law has only slightly worse performance than the control laws from~\cite{Vargas_2013} and~\cite{Chizeck_1986} although it is mode-independent, time-invariant, and can be precomputed offline.\\

Future work will be concerned with derivation of convergence guarantees for the iterative algorithm and an extension of the proposed approach to measurement-feedback control. Furthermore, an assumption of a more complicated policy structures such as polynomials constitutes another possible research direction.
   	
   \section{ACKNOWLEDGMENTS}
   This work was supported by the German Science Foundation
   (DFG) within the Research Training Group RTG 1194 ``Self-organizing Sensor-Actuator-Networks''.
   	
   \appendix
   \section{Proof of Theorem~\ref{theor:Main}}
   	\label{app:ProofMainTheorem}
 If the dynamics~(\ref{eq:SystemDynamics}) is mean square stabilizable then the second-moment state converges to a fixed point $\Xsysest{i}{\infty} = \lim\limits_{k\rightarrow\infty} \Xsysest{i}{k}$ that is the unique solution of
\begin{align*}
\Xsysest{j}{\infty}
	&=
	\sum\limits_{i=1}^{\NumModes}\transitionprob{ij}\bklammer{(\Asys{i}+\Bsys{i}\ControlLaw)\Xsysest{i}{\infty}(\Asys{i}+\Bsys{i}\ControlLaw)\tr + \modeest{i}{\infty}\Hsys{i}\wCov\Hsys{i}\tr}\ .
\end{align*}
This claim is proven in~\ref{app:ProofConvergence}.\\

Thus, the costs~(\ref{eq:CostFunc}) are finite and can be rewritten as~
\begin{align*}
\CostFunc
	&=
	\sum\limits_{i=1}^{\NumModes} \trace{(\Qsys{i}+\ControlLaw\tr\Rsys{i}\ControlLaw)\Xsysest{i}{\infty}}\ ,
\end{align*}
and the optimization problem~(\ref{eq:OptimizationProblem}) becomes
\begin{align}
\begin{aligned}
\min\limits_{\ControlLaw,\Xsysest{1}{\infty},\dots,\Xsysest{\NumModes}{\infty}}&\hspace{2mm} \trace{(\Qsys{i}+\ControlLaw\tr\Rsys{i}\ControlLaw)\Xsysest{i}{\infty}}\\
\st&\hspace{2mm}
	\Xsysest{j}{\infty}
		=
		\sum\limits_{i=1}^{\NumModes}\transitionprob{ij} \bklammer{(\Asys{i}+\Bsys{i}\ControlLaw)\Xsysest{i}{\infty} (\Asys{i}+\Bsys{i}\ControlLaw)\tr + \modeest{i}{k}\Hsys{i}\wCov\Hsys{i}\tr}\ .
\end{aligned}
\label{eq:OptimizationProblemRestated}
\end{align}
Defining the positive semidefinite Lagrange multiplier $\Lagrange{i}{\infty}$, we obtain the Hamiltonian~$\Hamiltonian$ of~(\ref{eq:OptimizationProblemRestated}) with
\begin{align*}
\Hamiltonian
	&=
	\sum\limits_{i=1}^{\NumModes} \trace{(\Qsys{i}+\ControlLaw\tr\Rsys{i}\ControlLaw)\Xsysest{i}{\infty} + \sum\limits_{j=1}^{\NumModes}\transitionprob{ij}\Lagrange{j}{\infty}\bklammer{(\Asys{i}+\Bsys{i}\ControlLaw)\Xsysest{i}{\infty}(\Asys{i}+\Bsys{i}\ControlLaw)\tr + \modeest{i}{k}\Hsys{i}\wCov\Hsys{i}\tr}\right.\\
	&\left. \vphantom{\sum\limits_{j=1}^{\NumModes}}
	\hspace{2cm}- \Lagrange{i}{\infty}\Xsysest{i}{\infty}}\ .
\end{align*}
Differentiation with respect to $\Xsysest{i}{\infty}$, $\Lagrange{i}{\infty}$, and $\ControlLaw$ yields~(\ref{eq:CoupledEquationsP})-(\ref{eq:CoupledEquationsL}).
%
%
%

   \section{Proof of Second Moment Convergence}
   	\label{app:ProofConvergence}
According to Banach's fixed point theorem, $\{\Xsysest{1}{k}, \dots, \Xsysest{\NumModes}{k}\}$ converges to the unique solution $\{\Xsysest{1}{\infty}, \dots,\Xsysest{\NumModes}{\infty}\}$ for $k\rightarrow\infty$ if (\ref{eq:SecondMomentDynamics}) is a contraction mapping. To show that (\ref{eq:SecondMomentDynamics}) is indeed a contraction mapping if (\ref{eq:SystemDynamics}) is MS-stabilizable, we define the vectorized second moment state vector
\begin{align*}
\vectorizedState{k}
	&=
	\begin{bmatrix}
		\vec{\Xsysest{1}{k}}\tr & \dots & \vec{\Xsysest{\NumModes}{k}}\tr
	\end{bmatrix}\tr\ ,
\end{align*}
where $\vec{\cdot}$ denotes the vectorization operator. The dynamics of $\vectorizedState{k}$ can be written as
\begin{align}
\vectorizedState{k+1}
	&=
	\bigM\vectorizedState{k} + \bigN{k}
\label{eq:ContractionMapping}
\end{align}
with $\bigM$ as in Theorem~\ref{theor:MSS} and
\begin{align*}
\bigN{k}
	&=
	\klammer{\kron{\TransitionMatrix{}\tr}{\identitymatrix}} \diag\begin{bmatrix}\modeest{1}{k}(\kron{\Bsys{1}}{\Bsys{1}}) & \dots & \modeest{\NumModes}{k}(\kron{\Bsys{\NumModes}}{\Bsys{\NumModes}})\end{bmatrix}\\
	&\times
	\begin{bmatrix}\vec{\wCov}\tr & \dots & \vec{\wCov}\tr\end{bmatrix}\tr\ .
\end{align*}
We need to show that
\begin{align*}
\|\vectorizedState{k+1} - \nvec{\phi}_{k+1}\| \leq \beta \|\vectorizedState{k} - \nvec{\phi}_{k}\|\ ,
\end{align*}
where $\beta\in(0;1)$ and
\begin{align*}
\nvec{\phi}_k
	&=
	\begin{bmatrix}
		\vec{\mat{Y}^{(1)}_{k}}\tr & \dots & \vec{\mat{Y}^{(\NumModes)}_{k}}\tr
	\end{bmatrix}\tr
\end{align*}
for any positive semidefinite $\{\mat{Y}^{(1)}_{k},\dots,\mat{Y}^{(\NumModes)}_{k}\}$. Using Lemma~5 from~\cite{Wang_86}, it holds
\begin{align*}
\|\vectorizedState{k+1} - \nvec{\phi}_{k+1}\|
	&=
	\|\bigM\vectorizedState{k} + \bigN{k} -\bigM\nvec{\phi}_{k} - \bigN{k}\|
	=
	\|\bigM(\vectorizedState{k}-\nvec{\phi}_{k})\| = \trace{\bigM\tr\bigM (\vectorizedState{k}-\nvec{\phi}_{k})(\vectorizedState{k}-\nvec{\phi}_{k})\tr}\\
	&\leq 
	\lambda_{\max} \|\vectorizedState{k} - \nvec{\phi}_{k}\|\ ,
\end{align*}
where $\lambda_{\max}$ is the largest eigenvalue of $\bigM\tr\bigM$. Because for $\lambda_{\max} < 1$ (\ref{eq:ContractionMapping}) is a contraction mapping, it has a unique fixed point.\\

Please note that the obtained result corresponds to the stability condition in Theorem~\ref{theor:MSS}.b because $\lambda_{max} = \spectralradius{\bigM}^2$ holds.
   	
   \section{Stabilizability Test}
   	\label{app:StabilizabilityTest}
In order to determine whether the MJLS~(\ref{eq:SystemDynamics}) is MS-stabilizable, we can solve the following optimization problem
\begin{align}
\min\limits_\ControlLaw\ &\ \spectralradius{\mat{M}}\ ,
\label{eq:SpectralRadiusMinimization}
\end{align}
where
\begin{align*}
\mat{M}
	&=
	\klammer{\kron{\TransitionMatrix{}\tr}{\identitymatrix}} 
	\diag\begin{bmatrix}
		\kron{\Aaug{1}}{\Aaug{1}} & \kron{\Aaug{2}}{\Aaug{2}} & \cdots & \kron{\Aaug{\NumModes}}{\Aaug{\NumModes}}
	\end{bmatrix}
\end{align*}
with
\begin{align*}
\Aaug{i}
	&=
	\Asys{i} + \Bsys{i}\ControlLaw\ .
\end{align*}
If the solution $\spectralradiusopt{\mat{M}}$ of~(\ref{eq:SpectralRadiusMinimization}) it holds $\spectralradiusopt{\mat{M}}<1$ then system~(\ref{eq:SystemDynamics}) is MS-stabilizable and we can compute the optimal linear control law according to the numerical algorithm provided in Sec.~\ref{sec:MainResult}. Fig.~\ref{fig:SpectralRadii} illustrates the spectral radii for the system from Sec.~\ref{sec:NumericalExample}. It can be seen that the value function in~(\ref{eq:SpectralRadiusMinimization}) is convex in this scenario.\\

\begin{figure}
\centering
\includegraphics[width=.9\textwidth]{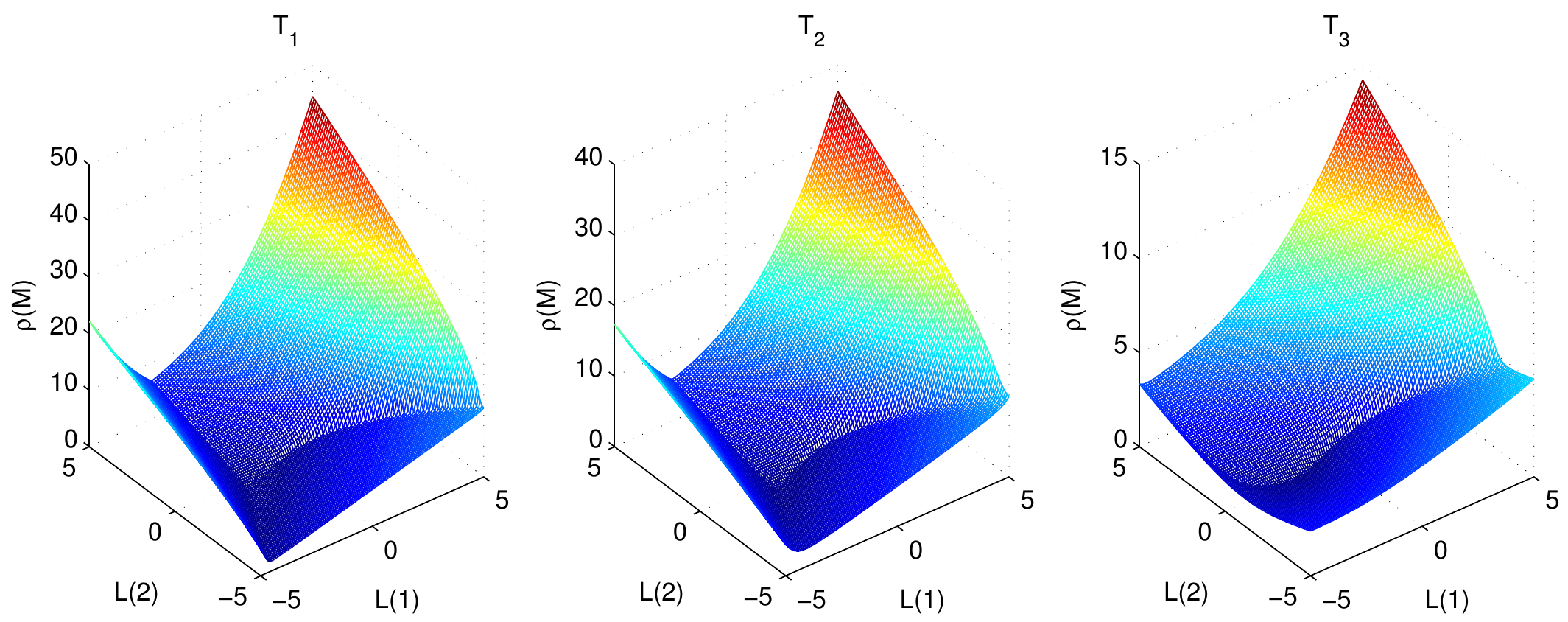}
\caption{Spectral radii for the MJLS from Sec.~\ref{sec:NumericalExample}.}
\label{fig:SpectralRadii}
\end{figure}

However, the value function $\spectralradius{\mat{M}}$ in~(\ref{eq:SpectralRadiusMinimization}) is non-smooth. Thus, we propose to use the smooth convex approximation presented in~\cite{Chen_2004}. The approximation replaces the spectral radius operator $\spectralradius{\mat{A}}$ by
\begin{align}
\mu_\epsilon(\mat{A})
	&=
	\epsilon \sum\limits_{i=1}^{N} \exp\klammer{\frac{\lambda_i}{\epsilon}}\ ,
\label{eq:Approximation}
\end{align}
where $\epsilon>0$ and $\lambda_i$ are the $N$ eigenvalues of $\mat{A}$.\\

Using this approximation, we let $\epsilon$ go from $1$ to $0$ and solve a sequence of optimization problems
\begin{align*}
\min\limits_{\ControlLaw}\ &\ \mu_\epsilon(\bigM)\ .
\end{align*}
Because for the approximation (\ref{eq:Approximation}) it holds
\begin{align*}
\lim\limits_{\epsilon\rightarrow\infty} \mu_\epsilon(\mat{A}) = \spectralradius{\mat{A}}\ ,
\end{align*}
we recover the initial optimization problem (\ref{eq:SpectralRadiusMinimization}) as $\epsilon$ goes to zero. Additionally, we can use the gradient and the Hessian given in~\cite{Chen_2004}.
    
    \bibliographystyle{IEEEtran}
    \bibliography{Sections/00_Literature}

\begin{thebibliography}{10}
\providecommand{\url}[1]{#1}
\csname url@samestyle\endcsname
\providecommand{\newblock}{\relax}
\providecommand{\bibinfo}[2]{#2}
\providecommand{\BIBentrySTDinterwordspacing}{\spaceskip=0pt\relax}
\providecommand{\BIBentryALTinterwordstretchfactor}{4}
\providecommand{\BIBentryALTinterwordspacing}{\spaceskip=\fontdimen2\font plus
\BIBentryALTinterwordstretchfactor\fontdimen3\font minus
  \fontdimen4\font\relax}
\providecommand{\BIBforeignlanguage}[2]{{%
\expandafter\ifx\csname l@#1\endcsname\relax
\typeout{** WARNING: IEEEtran.bst: No hyphenation pattern has been}%
\typeout{** loaded for the language `#1'. Using the pattern for}%
\typeout{** the default language instead.}%
\else
\language=\csname l@#1\endcsname
\fi
#2}}
\providecommand{\BIBdecl}{\relax}
\BIBdecl

\bibitem{Krasovskii_1961_1}
N.~N. Krasovskii and E.~A. Lidskii, ``{Analytic Design of Controller in Systems
  with Random Attributes - Part I},'' \emph{Automation and Remote Control},
  vol.~22, pp. 1021--1025, 1961.

\bibitem{Krasovskii_1961_2}
------, ``{Analytic Design of Controller in Systems with Random Attributes -
  Part II},'' \emph{Automation and Remote Control}, vol.~22, pp. 1041--1046,
  1961.

\bibitem{Krasovskii_1961_3}
------, ``{Analytic Design of Controller in Systems with Random Attributes -
  Part III},'' \emph{Automation and Remote Control}, vol.~22, pp.
  1289--11\,294, 1961.

\bibitem{Hespanha_2007}
J.~Hespanha, P.~Naghshtabrizi, and Y.~Xu, ``{A Survey of Recent Results in
  Networked Control Systems},'' \emph{Proceedings of the IEEE}, vol.~95, no.~1,
  pp. 138--162, 2007.

\bibitem{ACC13_Fischer}
J.~Fischer, A.~Hekler, M.~Dolgov, and U.~D. Hanebeck, ``{Optimal Sequence-Based
  LQG Control over TCP-like Networks Subject to Random Transmission Delays and
  Packet Losses},'' in \emph{Proceedings of the 2013 American Control
  Conference (ACC 2013)}, Washington D. C., USA, Jun. 2013.

\bibitem{doVal_1999}
J.~B.~R. do~Val and T.~Ba\c{s}ar, ``{Receding Horizon Control of Jump Linear
  Systems and a Macroeconomic Policy Problem},'' \emph{{Journal of Economic
  Dynamics and Control}}, vol.~23, no.~8, pp. 1099 -- 1131, 1999.

\bibitem{Elliott_2007}
R.~J. Elliott, T.~K. Siu, L.~Chan, and J.~W. Lau, ``{Pricing Options Under a
  Generalized Markov-Modulated Jump-Diffusion Model},'' \emph{Stochastic
  Analysis and Applications}, vol.~25, no.~4, pp. 821--843, 2007.

\bibitem{Vargas_2013}
A.~N. Vargas, E.~F. Costa, and J.~B.~R. do~Val, ``{On the Control of Markov
  Jump Linear Systems with no Mode Observation: Application to a DC Motor
  Device},'' \emph{International Journal of Robust and Nonlinear Control},
  vol.~23, no.~10, pp. 1136--1150, 2013.

\bibitem{Sworder_1969}
D.~Sworder, ``{Feedback Control of a Class of Linear Systems with Jump
  Parameters},'' \emph{IEEE Transactions on Automatic Control}, vol.~14, no.~1,
  pp. 9--14, Feb 1969.

\bibitem{Chizeck_1986}
H.~J. Chizeck and A.~S. Willsky, ``{Discrete-time Markovian Jump Linear
  Quadratic Optimal Control},'' \emph{International Journal of Control},
  vol.~43, no.~1, pp. 213 -- 231, 1986.

\bibitem{Fragoso_1989}
M.~D. Fragoso, ``{Discrete-time Jump LQG Problem},'' \emph{International
  Journal of Systems Science}, vol.~20, no.~12, pp. 2539--2545, 1989.

\bibitem{Chizeck_1988}
H.~Chizeck and Y.~Ji, ``{Optimal Quadratic Control of Jump Linear Systems with
  Gaussian Noise in Discrete-Time},'' in \emph{Proceedings of the 27th IEEE
  Conference on Decision and Control (CDC 1988)}, Dec 1988.

\bibitem{Griffiths_1985}
B.~E. Griffiths and K.~A. Loparo, ``{Optimal Control of Jump-linear Gaussian
  Systems},'' \emph{International Journal of Control}, vol.~42, no.~4, pp.
  791--819, 1985.

\bibitem{Casiello_1989}
F.~Casiello and K.~Loparo, ``{Optimal Control of Unknown Parameter Systems},''
  \emph{IEEE Transactions on Automatic Control}, vol.~34, no.~10, pp.
  1092--1094, Oct 1989.

\bibitem{Campo_1992}
L.~Campo and Y.~Bar-Shalom, ``{Control of Discrete-time Hybrid Stochastic
  Systems},'' \emph{IEEE Transactions on Automatic Control}, vol.~37, no.~10,
  pp. 1522--1527, Oct 1992.

\bibitem{Gupta_2003}
V.~Gupta, R.~Murray, and B.~Hassibi, ``{On the Control of Jump Linear Markov
  Systems with Markov State Estimation},'' in \emph{Proceedings of the 2003
  American Control Conference (ACC 2003)}, June 2003.

\bibitem{Costa_2000}
E.~Costa and J.~do~Val, ``{Stability of Receding Horizon Control of Markov Jump
  Linear Systems without Jump Observations},'' in \emph{Proceedings of the 2000
  American Control Conference (ACC 2000)}, June 2000.

\bibitem{Vargas_2014}
A.~N. Vargas, D.~C. Bortolin, E.~F. Costa, and J.~a.~B. do~Val,
  ``{Gradient-based Optimization Techniques for the Design of Static
  Controllers for Markov Jump Linear Systems with Unobservable Modes},''
  \emph{International Journal of Numerical Modelling: Electronic Networks,
  Devices and Fields}, 2014.

\bibitem{doVal_2002}
J.~B.~R. do~Val, J.~C. Geromel, and A.~P.~C. Gon\c{c}alves, ``{The H2-Control
  for Jump Linear Systems: Cluster Observations of the Markov State},''
  \emph{Automatica}, vol.~38, no.~2, pp. 343 -- 349, 2002.

\bibitem{Grinstead_2003}
C.~M. Grinstead and J.~L. Snell, \emph{Introduction to Probability},
  2nd~ed.\hskip 1em plus 0.5em minus 0.4em\relax American Mathematical Society,
  2003.

\bibitem{Morozan_1983}
T.~Morozan, ``{Optimal Stationary Control for Dynamic Systems with Markov
  Perturbations},'' \emph{Stochastic Analysis and Applications}, vol.~1, no.~3,
  pp. 299--325, 1983.

\bibitem{Fang_2002}
Y.~Fang and K.~Loparo, ``{Stochastic Stability of Jump Linear Systems},''
  \emph{IEEE Transactions on Automatic Control}, vol.~47, no.~7, pp.
  1204--1208, Jul 2002.

\bibitem{Ling_2012}
Q.~Ling and H.~Deng, ``{A New Proof to the Necessity of a Second Moment
  Stability Condition of Discrete-Time Markov Jump Linear Systems with Real
  States},'' \emph{Journal of Applied Mathematics}, 2012.

\bibitem{Costa_1993}
O.~Costa and M.~Fragoso, ``{Stability Results for Discrete-Time Linear Systems
  with Markovian Jumping Parameters},'' \emph{Journal of Mathematical Analysis
  and Applications}, vol. 179, no.~1, pp. 154 -- 178, 1993.

\bibitem{DeKoning_1992}
W.~De~Koning, ``{Compensatability and Optimal Compensation of Systems with
  White Parameters},'' \emph{IEEE Transactions on Automatic Control}, vol.~37,
  no.~5, pp. 579--588, May 1992.

\bibitem{Bernstein_1987}
D.~S. Bernstein and W.~M. Haddad, ``{Optimal Projection Equations for
  Discrete-time Fixed-Order Dynamic Compensation of Linear Systems with
  Multiplicative White Noise},'' \emph{International Journal of Control},
  vol.~46, no.~1, pp. 65--73, 1987.

\bibitem{Cloudrunner}
\BIBentryALTinterwordspacing
M.~Dolgov, ``{State-feedback linear optimal controller for MJLS},'' Jun. 2015.
  [Online]. Available: \url{http://www.cloudrunner.eu/algorithm/174}
\BIBentrySTDinterwordspacing

\bibitem{Wang_86}
S.-D. Wang, T.-S. Kuo, and C.-F. Hsu, ``{Trace Bounds on the Solution of the
  Algebraic Matrix Riccati and Lyapunov Equation},'' \emph{IEEE Transactions on
  Automatic Control}, vol.~31, no.~7, pp. 654--656, 1986.

\bibitem{Chen_2004}
X.~Chen, H.~Qi, L.~Qi, and K.-L. Teo, ``{Smooth Convex Approximation to the
  Maximum Eigenvalue Function},'' \emph{Journal of Global Optimization},
  vol.~30, no.~2, pp. 253--270, 2004.

\end{thebibliography}
\end{document}